\newtheorem{theorem}{Theorem}
\newtheorem{lemma}[theorem]{Lemma}
\theoremstyle{definition}
\theoremstyle{remark}
\begin{document}
%
\title{
A Stochastic Geometry Analysis of Energy Harvesting in Large Scale Wireless Networks
}

\author{
   \IEEEauthorblockN{Ziru Chen\IEEEauthorrefmark{1}, Zhao Chen\IEEEauthorrefmark{1}, Lin X. Cai\IEEEauthorrefmark{1}, Yu Cheng\IEEEauthorrefmark{1}, and Ruoting Gong\IEEEauthorrefmark{2}}
    \IEEEauthorblockA{\IEEEauthorrefmark{1}Department of Electrical and Computer Engineering, Illinois Institute of Technology, Chicago, USA }
    \IEEEauthorblockA{\IEEEauthorrefmark{2}Department of Applied Mathematics, Illinois Institute of Technology, Chicago, USA}
\IEEEauthorblockA{zchen71@hawk.iit.edu, \{zchen84, lincai, cheng, rgong2\}@iit.edu}
}

\maketitle

\begin{abstract}
In this paper,  the theoretical sustainable capacity of wireless networks  with radio frequency (RF) energy harvesting is analytically studied. Specifically, we consider a large scale wireless network where base stations (BSs) and low power wireless devices are deployed by homogeneous Poisson point process (PPP) with different spatial densities. Wireless devices exploit the downlink transmissions from the BSs for either information delivery or energy harvesting. Generally, a BS schedules downlink transmission to wireless devices. The scheduled device receives the data information while other devices harvest energy from the downlink signals. The data information can be successfully received by the scheduled device only if the device has sufficient energy for data processing, i.e., the harvested energy is larger than a threshold. Given the densities of BSs and users, we apply stochastic geometry to analyze the expected number of users per cell and the successful information delivery probability of a wireless device, based on which the total network throughput can be derived. It is shown that the maximum network throughput per cell can be achieved under the optimal density of BSs. Extensive simulations validate the analysis.

\end{abstract}

\begin{IEEEkeywords}
 Energy harvesting, stochastic geometry, high density wireless network, network throughput analysis.
\end{IEEEkeywords}

\section{Introduction}
Recently, new enhancements for cellular Internet of things (IoT) communication are proposed in the 3GPP releases~\cite{gozalvez2016new}. As the era of IoT approaches, wireless charging technology has been proposed as a promising solution to supply power to a massive number of wireless IoT devices~\cite{Cai_JSAC14,cai2011dimensioning}. By exploiting wireless signals in the surrounding environment coming from TV towers~\cite{vyas2013wehp} and 
cellular base stations (BSs)~\cite{chen2017optimal,chen2017sustainable,zhao_TVT14}, wireless charging enables low-power wireless IoT devices to scavenge ambient radio frequency (RF) energy and convert it into DC power to keep themselves alive without replacing or recharging their batteries.

Stochastic geometry~\cite{haenggi2009stochastic,andrews2011tractable,Cai_TSG13} has been widely adopted to characterize the random deployment of BSs and wireless users for network performance analysis, which can be employed to quantify the co-channel interference~\cite{zhang2014stochastic} and also be incorporated with random channel effects such as fading and shadowing~\cite{keeler2013sinr}.
More recently, stochastic geometry approaches have been applied to analyze the capacity of RF energy harvesting in wireless networks~\cite{sakr2015analysis,ding2014wireless,flint2015performance,khan2016millimeter}.
In~\cite{sakr2015analysis}, a k-tier heterogeneous network was modeled by Poisson point process (PPP) and users harvest energy from ambient RF signals for uplink transmissions. In~\cite{ding2014wireless}, the authors considered wireless information and power transfer in cooperative networks with spatially random deployed relays.
By employing a general repulsive point process~\cite{flint2015performance}, point-to-point uplink transmission powered by ambient RF energy was studied and power and transmission outage probability was derived.
Besides, energy harvesting over millimeter-wave band with a certain  beamwidth was investigated in~\cite{khan2016millimeter}.
These aforementioned works studied energy harvesting enabled point-to-point transmission performance over a single link. To the best of our knowledge, there is no analytical study on the energy harvesting enabled network performance in a multi-point to multi-point wireless network environment with multi-user scheduling.

Generally, a low-power IoT device can harvest RF energy when it is not scheduled for data communications. When the user density increases, a user may wait for a longer time for data communications, yet in the meantime it may harvest more RF energy to fulfill the data communication requirement. Research works on link layer performance study mainly focused on the impacts of the density of BSs but ignored the density of users on the downlink transmission performance, which also plays a critical role in energy harvesting and transmission scheduling. This motivates us to investigate the network performance under different densities of BSs and users and answer the following questions: how the throughput performance is impacted under different densities of BSs and low-power users, how can we achieve the best throughput performance by determining the density of BSs, and under what conditions the users can operate sustainably.
Note that in multiuser network, given the density of users, although each user can harvest more energy as the density of BSs increases,
the intra-cell interference becomes more severe and the low-power user may not have enough time to harvest sufficient RF energy since there will be a fewer number of users in a cell. Meanwhile, given the density of BSs, there exists a minimal density of the users that ensures energy sustainable operation of users, i.e., a user can always harvest sufficient RF energy for data communication when it is scheduled, such that the maximum network throughput can be achieved.

The main contribution of the paper is four-fold. First, we apply stochastic geometry to analyze the  signal to interference and noise ratio (SINR) of a single user. Second, we further derive the probability that a user has enough energy for information processing when it is scheduled for data reception, which is denoted as the successful information delivery probability. Third, based on the derived information delivery probability, the per cell network throughput and total network throughput can be obtained. We show that for a given density of users, the maximum per cell throughput can be achieved under the optimal density of BSs. Finally, extensive simulations are conducted to validate the analysis.

The remainder of this paper is organized as follows. The system model is presented in Section~\ref{sec:sysmod}. The analysis of the successful information delivery probability and the network throughput are provided in Section III and Section IV, respectively. Then, numerical results are given in Section V, followed by concluding remarks in Section IV.



\section{System model} \label{sec:sysmod}

We consider a high-density cellular network, where the deployments of BSs and wireless powered IoT devices follow homogeneous Poisson point process (PPP) with different densities, as shown in Fig.~\ref{pic:by:demo}. Specifically, on the Euclidean plane $\mathbb{R}^2$, the locations of BSs are modeled by a homogeneous PPP $\Phi_{B}=\{b_{i}:i=1,2,\ldots\}$ of spatial density $\lambda_{B}$, while the deployment of users are modeled by a homogeneous PPP $\Phi_{U}=\{u_{i}:i=1,2,\ldots\}$ of spatial density $\lambda_{U}$.
In the network, each wireless device or wireless user will associate with the nearest BS and the coverage area of each BS comprises a Voronoi cell.

In each Voronoi cell, the BS $b_i \in \Phi_{B}$ schedules time slotted downlink transmissions to its associated users, i.e., each user takes turns to be served in each slot in a round robin fashion. Thus, for a cell with $N$ users, each user will transmit once for every $N$ slots and only one user is scheduled for downlink communications in one cell at any time slot. Each cell may have different numbers of users, however, without loss of generality, we consider that all BSs schedule their associated users one round after another, i.e., BSs always transmit to its associate users. 
Since a wireless user can either harvest energy from all surrounding BSs or receive downlink data from the associated BS, a ``harvest then receive'' strategy is adopted by wireless users. That is, each user can harvest energy and store the harvested energy in a capacitor when it is not scheduled for data communications; and use the stored energy for data information reception when it is scheduled for downlink communication by the serving BS~\cite{salter2009rf}~\footnote{With the use of capacitor, a wireless user depletes its stored energy for one data communication}.
It is worth noting that information delivery will proceed successfully only when the user has harvested sufficient RF energy when the user is scheduled for data communications. If the energy is not sufficient for receiving the data, the wireless user will keep  harvesting energy from the downlink transmissions yet no information will be delivered. Denote the downlink transmission power of BSs as $P_S$, and the energy consumption of data reception in each slot as $E_{th}$.



Without loss of generality, the received RF power of a typical user 
at any time slot is given by
\begin{align}\label{eq.PH}
&P_{\mathrm{H}} = \sum_{b_i\in\Phi_{B}} a{P_S h_{i}r_{i}^{-\alpha}},
\end{align}
where $a \in (0,1)$ is the RF-to-DC energy conversion efficiency, $h_i \sim \exp(1)$ is the i.i.d. Rayleigh channel fading,  $r_i$ is  the distance between a typical user and any BS $b_i \in \Phi_B$, and $\alpha$ is the path loss exponent. 

For a typical user scheduled for downlink communications,  the received signal power from its associated BS $b_1$
is interfered by the wireless signals from other concurrent transmissions from other BSs.
As a result, the received SINR of a typical user can be written as
\begin{align}
&\mathrm{SINR} = \frac{P_S h_1 r_1^{-\alpha}}{\sigma^2+\sum_{b_i \in \Phi_{B}/b_{1}} P_S h_{i} r_{i}^{-\alpha}} = \frac{P_S h_1 r_1^{-\alpha}}{\sigma^2+I_{1}},
\end{align}
where the aggregate interference of a typical user is $I_1 = \sum_{b_i \in \Phi_{B}/b_{1}} P_S h_{i} r_{i}^{-\alpha}$. Here, $r_1$ denotes the distance of a typical user to its nearest BS $b_1$. According to \cite{andrews2011tractable}, the distribution of distance $r_1$ is given by:
\begin{align}
f_{R_1}(r_1) = 2\pi \lambda_B r_1 e^{-\lambda_B\pi r_1^2}.
\end{align}

In the following sections, we analyze the downlink throughput performance of a wireless network with energy harvesting under different deployment densities of BSs and users. 
\begin{figure}[t]
\centering
\includegraphics[width=7cm]{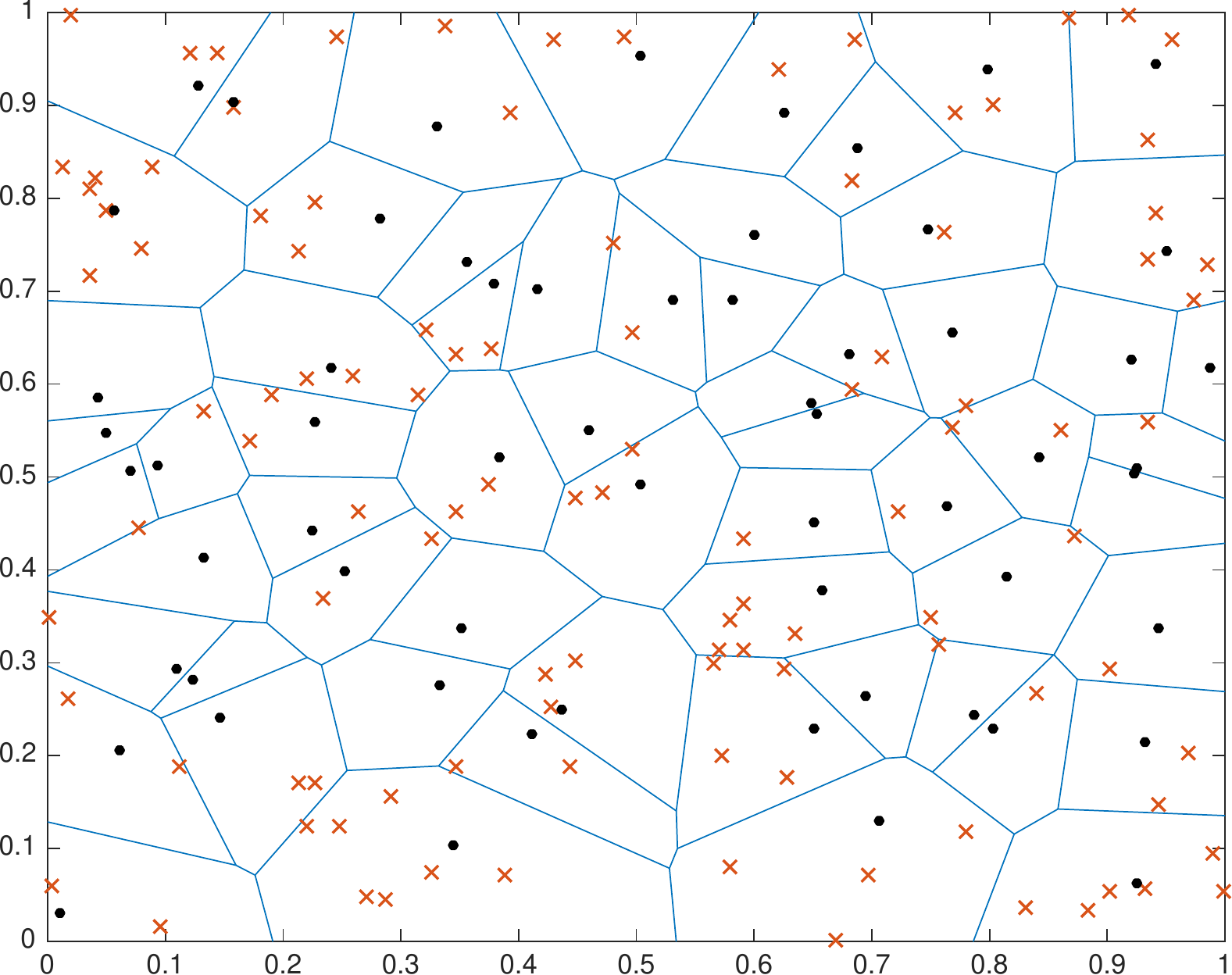}
\caption{Illustration of a Poisson distributed cellular network with each mobile associated with the nearest BS. The cell boundaries are shown and form a Voronoi tessellation.}
\label{pic:by:demo}
\end{figure}

\section{Successful Information Delivery Probability}
In this section, we study the successful information delivery probability, i.e., the probability that a typical user successfully receives data from its associated BS. For a user to successfully receive data, the user needs to harvest sufficient energy for data processing when it is scheduled. Thus, we first analyze the energy harvesting performance of a wireless user.


\subsection{Energy Harvesting Analysis}

Given that there are $N+1$ users in a Voronoi cell including a typical user and $N$ other users, and $r_1$, the distance between the typical user and its associated BS, according to \eqref{eq.PH},
we can obtain the probability that a typical user is ready to receive data after $k$ rounds of transmission schedules, i.e., the total energy harvested by the typical user exceeds the threshold $E_{\mathrm{th}}$ after $k$ rounds, as in the following lemma.

\begin{lemma}\label{lm.EK}
For a typical user, the probability that the total energy harvested after $k$ rounds of transmissions exceeds the threshold $E_{\mathrm{th}}$ conditioned on the number of other users in the cell and the nearest distance $r_1$ can be approximated by
\begin{align}
&\mathbb{P}\left(\sum_{j=1}^{k(n+1)-1} P_{\mathrm{H},j} \geq E_{\mathrm{th}}  \bigg| N=n, R=r_1 \right) \nonumber\\
\approx &\sum_{j=0}^{k-1} \frac{\exp(-\Theta(k,n,r_1))\Theta(k,n,r_1)^j}{j!}, \nonumber
\end{align}
where $P_{\mathrm{H},j}$ is the energy harvested in a slot, $j \in \{1,\ldots,k\}$ and $\Theta(k,n, r_1) = \frac{E_{\mathrm{th}}r_1^\alpha}{aP_S}-\frac{2\pi [k(n+1)-1]\lambda_Br_1^{2}}{\alpha-2}$.

\end{lemma}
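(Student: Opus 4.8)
The plan is to condition throughout on $N=n$ and $R=r_1$, split the accumulated harvested energy into the contribution of the serving base station $b_1$ and that of all other base stations, freeze the second contribution at its conditional mean, recognise the first as an integer‑shape Gamma (Erlang) variable, and finally invoke the classical identity that turns an Erlang tail into a truncated Poisson sum.

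First I would set up the decomposition. Put $m:=k(n+1)-1$ for the number of slots that elapse before the typical user's $k$-th scheduled slot, and let $h_{i,j}\sim\exp(1)$ be the i.i.d.\ fading from $b_i$ in slot $j$. Using the standard property that, conditioned on the nearest base station being at distance $r_1$, the remaining base stations form a PPP of intensity $\lambda_B$ on $\{x\in\mathbb{R}^2:|x|>r_1\}$, we write
\begin{align}
\sum_{j=1}^{m} P_{\mathrm{H},j}
= aP_S r_1^{-\alpha}\sum_{j=1}^{m} h_{1,j}
\;+\; aP_S\sum_{j=1}^{m}\sum_{b_i\in\Phi_{B}/b_{1}} h_{i,j}\, r_i^{-\alpha}
\;=:\; aP_S r_1^{-\alpha}\,G + Y. \nonumber
\end{align}

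Next I would replace the aggregate term $Y$ by its conditional mean, which is the approximation responsible for the ``$\approx$'' in the statement. Since the fading is independent of $\Phi_B$ with unit mean, Campbell's theorem gives $\mathbb{E}[Y\mid R=r_1]= aP_S\,m\,\lambda_B\int_{r_1}^{\infty}\rho^{-\alpha}\,2\pi\rho\,d\rho = \frac{2\pi aP_S\,m\,\lambda_B\,r_1^{2-\alpha}}{\alpha-2}$, finite exactly because $\alpha>2$. Modelling the energy delivered by $b_1$ over the $k$ rounds as $aP_S r_1^{-\alpha}$ times a sum of $k$ i.i.d.\ unit exponentials (one increment per round), i.e.\ $G\sim\mathrm{Gamma}(k,1)$ after rescaling, the event $\{\sum_{j=1}^{m}P_{\mathrm{H},j}\geq E_{\mathrm{th}}\}$ reduces, conditionally, to
\begin{align}
\Big\{\,G\geq \frac{r_1^{\alpha}}{aP_S}\Big(E_{\mathrm{th}}-\frac{2\pi aP_S\,m\,\lambda_B\,r_1^{2-\alpha}}{\alpha-2}\Big)\Big\}
= \big\{\,G\geq\Theta(k,n,r_1)\,\big\}. \nonumber
\end{align}
For $\Theta\geq 0$, the Erlang tail identity — equivalently the Poisson‑process relation $\mathbb{P}(\mathrm{Gamma}(k,1)>\theta)=\mathbb{P}(\mathrm{Pois}(\theta)\leq k-1)$ — gives $\mathbb{P}(G\geq\Theta)=\sum_{j=0}^{k-1}e^{-\Theta}\Theta^{j}/j!$, which is exactly the claimed expression; when $\Theta<0$ the conditional probability is $1$, consistent with reading the sum as $1$.

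The step I expect to be the main obstacle is the mean‑field replacement of $Y$: quantifying, uniformly in $(k,n,r_1)$, the error of freezing a skewed aggregate at its mean. Its variance is controlled by $\int_{r_1}^{\infty}\rho^{-2\alpha}\rho\,d\rho$ (finite for $\alpha>1$), so a Chebyshev‑type bound is available in principle, but the behaviour as $\alpha\downarrow 2$ makes a clean estimate delicate. A secondary point is justifying the per‑round Erlang model for $G$ and the bookkeeping of which of the $m$ elapsed slots are genuine harvesting slots rather than the typical user's own $k-1$ earlier service slots.
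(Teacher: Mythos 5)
Your proof follows the same route as the paper's: split the accumulated energy into the serving-BS term and the aggregate from all other BSs, replace the aggregate by its conditional mean computed via Campbell's theorem (you obtain exactly the paper's $\frac{2\pi[k(n+1)-1]\lambda_B r_1^{2-\alpha}}{\alpha-2}$ and hence the same $\Theta$), and then apply the Erlang-tail-to-Poisson-sum identity to the serving-BS term. The one point of divergence is the shape parameter of that Erlang variable, and it is precisely the ``secondary point'' you flagged at the end. The paper keeps the per-slot decomposition of the left-hand side: the serving-BS contribution is $aP_S r_1^{-\alpha}\sum_{j=1}^{k(n+1)-1}h_{1,j}$, i.e.\ $\mathrm{Erlang}(k(n+1)-1,1)$ after rescaling, whose tail is $\sum_{j=0}^{k(n+1)-2}e^{-\Theta}\Theta^{j}/j!$ --- not the $\sum_{j=0}^{k-1}$ stated in the lemma. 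Your ``one increment per round'' model $G\sim\mathrm{Gamma}(k,1)$ is chosen to reproduce the stated right-hand side, but it is inconsistent with the left-hand side as written, which sums $k(n+1)-1$ i.i.d.\ exponential fades from $b_1$. So neither your argument nor the paper's cleanly yields the displayed formula: the paper's appendix stops at the $\mathrm{Erlang}(k(n+1)-1,1)$ tail and then states the Poisson sum with only $k$ terms, an internal inconsistency (the upper limit should be $k(n+1)-2$ to match the proof, or the serving-BS fading should be modeled per round to match the statement). Everything else in your write-up --- the conditional intensity of $\Phi_B$ outside radius $r_1$, the finiteness condition $\alpha>2$, the caveat that the probability is $1$ when $\Theta<0$, and the acknowledgement that the mean-field replacement of the aggregate is the uncontrolled approximation behind the ``$\approx$'' --- matches or improves on the paper's treatment.
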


\begin{proof}
Please refer to Appendix \ref{app.EK}.
\end{proof}

With Lemma 1, if a typical user requires $K$ rounds of transmissions to satisfy the energy threshold $E_{th}$, the probability mass function (PMF) of $K$ can be derived as follows,
\begin{align}
&\mathbb{P}(K =k |N=n,R=r_1) \nonumber\\
= &\mathbb{P}\!\left(\!\sum_{j=1}^{(k-1)(n+1)-1}\!P_{\mathrm{H},j}<E_{\mathrm{th}}\leq \sum_{j=1}^{k(n+1)-1}\!P_{\mathrm{H},j}  \bigg|\!N=\!n, R=r_1\!\right)  \nonumber \\
= &\mathbb{P}\left(\sum_{j=1}^{k(n+1)-1} P_{\mathrm{H},j} \geq E_{\mathrm{th}}  \bigg| N=n,R=r_1  \right)\nonumber\\
 & - \mathbb{P}\left(\sum_{j=1}^{(k-1)(n+1)-1} P_{\mathrm{H},j} \geq E_{\mathrm{th}}  \bigg| N=n,R=r_1  \right)\!. \label{eq.K_r_1}
\end{align}

\subsection{Information Delivery Analysis}
A typical user can successfully receive data only when the user is scheduled for information delivery and it has harvested sufficient RF energy. 
In order to analyze the information delivery probability, we first condition on a given distance $r_1$, the distance between a typical user and its associated BS. 
\begin{align}\label{eq.pt}
P(\mathrm{Tr}) = \int_{0}^{+\infty} P(\mathrm{Tr}|R=r_1) f_{R_1}(r_1)dr_1,
\end{align}
where $P(\mathrm{Tr}|R=r_1)$ denotes the information delivery probability conditioned on $r_1$. 
Basically, the conditional probability will increase as the distance $r_1$ becomes smaller because more energy can be harvested at a smaller distance $r_1$.
$P(\mathrm{Tr}|R=r_1)$ can be further derived from $P({\mathrm{Tr}}|N=n,R=r_1)$ as
\begin{align}\label{eq.Trans_r1}
P(\mathrm{Tr}|R=r_1) =\! \sum_{n=0}^{\infty}\! {P}({\mathrm{Tr}}|N=n,R=r_1) \mathbb{P}(N=n|R=r_1).
\end{align}
Conditioning on $K=k$, the minimum number of transmission rounds for a typical user to harvest sufficient energy, we have 
\begin{align}
& {P}({\mathrm{Tr}}|N = n,R=r_1) \nonumber \\
= & \sum_{k=1}^\infty {P}({\mathrm{Tr}}|N=n,K=k,R=r_1) \mathbb{P}(K=k |N=n, R=r_1).\label{eq.TR_n_r1}
\end{align}

Since the information delivery probability is defined as the probability that a typical user has harvested enough energy to succcessfully receive data when it has been selected, if the typical user needs to harvest energy for $k$ rounds before
it has enough energy to receive data at the scheduled slot in the
last round, it will has a
successful information delivery once for every $K = k$ rounds
of scheduled transmissions. When $k$ is given, the successful
information delivery probability is independent of the number
of other users in the typical cell and the distance $r_1$, the
conditional information delivery probability is
\begin{align}\label{eq.energy_round}
&{\mathbb{P}}({\mathrm{Tr}}|N=n,K=k,R=r_1)=\frac{1}{k}.
\end{align}

Therefore, by combining \eqref{eq.TR_n_r1} and \eqref{eq.energy_round}, we have
\begin{align}
& {P}({\mathrm{Tr}}|N = n,R=r_1) =  \sum_{k=1}^\infty \frac{1}{k}\mathbb{P}(K=k |N=n, R=r_1).
\end{align}

To obtain $P(\mathrm{Tr}|R=r_1)$, we need to derive $\mathbb{P}(N=n|R=r_1)$.  Denote $X$ as the area of the cell where the typical user is located, we have
\begin{align}
&\mathbb{P}(N=n|R=r_1)  \nonumber\\
= &\int_{0}^{+\infty} \mathbb{P}(N=n|X=x,R=r_1) f_{X|R_1}(x)dx. \hspace{-0.5em} \nonumber \label{eq.N_r1}
\end{align}
where $f_{X|R_1}(x)$ is the PDF of $X$ conditioned on $r_1$.
For a given $X$, $N$ and $R$ are conditionally independent, and we have
\begin{align}
\mathbb{P}(N\!=\!n|X\!=\!x,R=r_1)\!=\!\mathbb{P}(N\!=\!n|X\!=\!x) = \frac{(\lambda_U x)^n}{n!} e^{-\lambda_U x},\nonumber
\end{align}
 and$f_{X|R_1}(x)$ can be obtained as
\begin{align}
f_{X|R_1}(x) = \frac{f_{R_1|X = x}(r_1)f_{X}(x)}{f_{R_1}(r_1)},
\end{align}
with the conditional PDF $f_{R_1|X = x}(r_1)$ given in the following lemma.

\begin{lemma}\label{lm.X_r1}
For a typical user, the conditional PDF of the distance $r_1$ conditioned on the cell area $x$ that it falls in can be approximated by
\begin{align}
f_{R_1|X=x}(r_1) \approx  \frac{6.029 r_1}{x}e^{-3.891 \left(\frac{r_1}{x}\right)^{2.7} }
\end{align}
\end{lemma}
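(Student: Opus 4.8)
The plan is to reduce the statement to a scale-free form, posit a Weibull-type parametric family whose only genuine freedom is two shape constants, and then fit those constants. The starting observation is that a homogeneous PPP is scale invariant: under $\lambda_B \mapsto c\,\lambda_B$ every interpoint distance is multiplied by $c^{-1/2}$ and every Voronoi cell area by $c^{-1}$. Hence the dimensionless quantity $Z := R_1/\sqrt{X}$ --- the distance from the typical user to its serving BS normalized by the characteristic length $\sqrt{X}$ of its cell --- has a law that does not depend on $\lambda_B$, and $f_{R_1\mid X=x}(r_1) = x^{-1/2} f_{Z\mid X=x}\!\left(r_1/\sqrt{x}\right)$. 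The crucial approximation is to treat $Z$ as independent of $X$: this amounts to assuming that Poisson--Voronoi cells are statistically self-similar across sizes, so that, after rescaling by $\sqrt{X}$, the location of the typical user inside its cell carries no residual information about the cell area. I would support this both by the fact that it holds exactly in the disk idealization of a cell (where $Z=\sqrt{U/\pi}$ with $U$ uniform, independent of the radius) and by direct inspection of the simulated tessellations already used elsewhere in the paper.

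Next I would fix the shape of $f_Z$. Since a typical point of the plane lies in its cell uniformly (after the usual area-size-biasing of cells), the two-dimensional area element contributes a factor proportional to $z$, while the rate at which a uniform point can get far from the nucleus in an elongated cell produces a light, faster-than-Gaussian tail; a stretched exponential is the natural one-term model for the latter. This yields the ansatz $f_Z(z) = A\,z\,e^{-B z^{p}}$ with $B>0$, $p>0$. Imposing $\int_0^{\infty} A\,z\,e^{-Bz^{p}}\,dz = 1$ is elementary and forces $A = p\,B^{2/p}/\Gamma(2/p)$, so only the pair $(B,p)$ remains free. Substituting $z = r_1/\sqrt{x}$ and using $A z x^{-1/2} = A r_1/x$ reproduces the displayed functional form, with the cell's characteristic length $\sqrt{x}$ entering the exponent.

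Finally I would determine $(B,p)$ by a nonlinear least-squares (equivalently, maximum-likelihood) fit of the ansatz to the empirical conditional density of $R_1$ given $X$ collected from simulated Poisson--Voronoi tessellations; this returns $p \approx 2.7$ and $B \approx 3.891$, whence $A = 2.7\cdot 3.891^{2/2.7}/\Gamma(2/2.7) \approx 6.029$, matching the stated constants. As an internal consistency check I would integrate the approximation against the standard gamma approximation for the cell area, $f_X(x) \approx \frac{(3.5\,\lambda_B)^{3.5}}{\Gamma(3.5)}\,x^{2.5}\,e^{-3.5\,\lambda_B x}$, and verify that it returns, to good numerical accuracy, the exact nearest-BS density $f_{R_1}(r_1) = 2\pi\lambda_B r_1 e^{-\pi\lambda_B r_1^2}$. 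The main obstacle is intrinsic rather than technical: the shape distribution of Poisson--Voronoi cells, and thus the exact joint law of $(R_1,X)$, has no closed form, so the lemma is necessarily an approximation; the real content lies in the self-similarity assumption that decouples $Z$ from $X$ and in the adequacy of the two-parameter Weibull family, neither of which can be established analytically and both of which I would validate numerically.
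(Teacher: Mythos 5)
Your proposal follows essentially the same route as the paper: reduce to a unit-area cell via the scaling relation $f_{R_1|X=x}(r_1) = f_{R_1|X=1}\left(r_1/\sqrt{x}\right)/\sqrt{x}$, posit the parametric family $c_1 r_1^{c_2} e^{-c_3 r_1^{c_4}}$, and determine the constants by least-squares fitting (the paper fits them so that integrating against the gamma approximation of $f_X$ reproduces the known marginal $f_{R_1}(r_1)=2\pi\lambda_B r_1 e^{-\lambda_B\pi r_1^2}$ --- exactly your ``consistency check'' --- whereas you fit directly to simulated conditional data, an immaterial difference). One small point: your scaling argument produces $\left(r_1/\sqrt{x}\right)^{2.7}$ in the exponent rather than the $\left(r_1/x\right)^{2.7}$ printed in the lemma; since the paper's own appendix uses the same $\sqrt{x}$ normalization, the printed statement appears to contain a typo rather than your derivation an error.
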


\begin{proof}
Please refer to Appendix \ref{app.X_r1}.
\end{proof}

\section{Network Throughput Analysis}
Based on the derived successful information delivery probability, we can analyze the network throughput performance of a wireless network with energy harvesting. The expected total network throughput is the product of the density of BSs with associated users, $\lambda_B^{\prime}$, and the average per cell throughput, $T_{\mathrm{avg}}$,
\begin{align}\label{eq.throughput}
T = \lambda_B^{\prime} T_{\mathrm{avg}},
\end{align}
where $\lambda_B^{\prime}=\lambda_B(1-(1+3.5^{-1}\lambda_u/\lambda_B)^{-3.5})$\cite{yu2013downlink} is the density of non-empty cells. To obtain total network throughput, we need to analyze the average per cell throughput.

The average per cell throughput is the achieved downlink transmission rate of a user, given that the user can successfully receive data with probability $\mathbb{P}(\mathrm{Tr})$. Thus, we have
\begin{align}\label{averageT}
T_{\mathrm{avg}}
=& \mathbb{E}[\log(1+\mathrm{SINR})\cdot \mathbb{P}{(\mathrm{Tr})}] 
\end{align}

For notation simplicity, define $C = \log(1+\mathrm{SINR})$. Condition on $r_1$, (\ref{averageT}) can be re-written as 

\begin{align}\label{averageT1}
T_{\mathrm{avg}} &= \mathbb{E}[C \cdot \mathbb{P}{(\mathrm{Tr})}]\nonumber\\
&= \int_{0}^{+\infty}\mathbb{E}\left[C \cdot \mathbb{P}{(\mathrm{Tr})} |R=r_1 \right] \cdot f_{R_1}(r_1)dr_1  \\
&= \int_{0}^{+\infty}\mathbb{E} \left[ C|R=r_1 \right]\cdot \mathbb{P}{(\mathrm{Tr}|R=r_1)} \cdot f_{R_1}(r_1)dr_1. \label{eq.T_avg_r1}
\end{align}

We can derive the conditional probability of $C$ in the following lemma.
\begin{lemma}\label{lm.SINR_threshold}
For a typical user, the probability of link capacity $C$ to be larger than a value $t$ conditioned on the distance $r_1$ is given by
\begin{align}
\mathbb{P} (C \geq t | R=r_1) = e^{-\frac{ (2^t-1) r_1^{\alpha}\sigma^2}{P_S} - \pi\lambda_{B}r_1^2 \rho(2^t-1) },
\end{align}
where $\rho(x) = x^{\frac{2}{\alpha}} \int^{+\infty}_{x^{-\frac{2}{\alpha}}} \frac{du}{1+u^{\frac{\alpha}{2}}}$.
\end{lemma}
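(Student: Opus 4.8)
The plan is to convert the rate-threshold event into an SINR-threshold event and then use the memorylessness of the desired-link fading together with the probability generating functional (PGFL) of the interfering PPP, following the standard coverage-probability argument of \cite{andrews2011tractable}. First I would observe that $C=\log(1+\mathrm{SINR})$ is strictly increasing in $\mathrm{SINR}$, so $\{C\geq t\}=\{\mathrm{SINR}\geq 2^t-1\}$; writing $\theta=2^t-1$ and substituting $\mathrm{SINR}=P_S h_1 r_1^{-\alpha}/(\sigma^2+I_1)$ with $I_1=\sum_{b_i\in\Phi_B/b_1}P_S h_i r_i^{-\alpha}$, this event is equivalent to $\{h_1\geq \theta r_1^\alpha(\sigma^2+I_1)/P_S\}$.

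Second, I would condition on $R=r_1$ and on the configuration that generates $I_1$, and integrate out $h_1\sim\exp(1)$ (independent of everything else) to get $\exp(-\theta r_1^\alpha\sigma^2/P_S)\exp(-\theta r_1^\alpha I_1/P_S)$; taking the expectation over $I_1$ then yields
\begin{align}
\mathbb{P}(C\geq t\mid R=r_1)=e^{-\theta r_1^\alpha\sigma^2/P_S}\,\mathcal{L}_{I_1}\!\left(\frac{\theta r_1^\alpha}{P_S}\right),\nonumber
\end{align}
where $\mathcal{L}_{I_1}$ is the Laplace transform of $I_1$.

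Third --- the crux of the argument --- I would evaluate $\mathcal{L}_{I_1}(s)$ at $s=\theta r_1^\alpha/P_S$. Since a user associates with its nearest BS, conditioning on $R=r_1$ forces the interferers to lie outside the disk of radius $r_1$ around the user; by the reduced Palm description they still form a homogeneous PPP of intensity $\lambda_B$ on $\{v>r_1\}$. Applying the PGFL and averaging over the i.i.d.\ $\exp(1)$ fades gives
\begin{align}
\mathcal{L}_{I_1}(s)=\exp\!\left(-2\pi\lambda_B\int_{r_1}^{\infty}\Bigl(1-\frac{1}{1+sP_S v^{-\alpha}}\Bigr)v\,dv\right).\nonumber
\end{align}
Putting $sP_S=\theta r_1^\alpha$ and changing variables via $u=\theta^{-2/\alpha}(v/r_1)^2$ turns the exponent into $\pi\lambda_B r_1^2\,\theta^{2/\alpha}\int_{\theta^{-2/\alpha}}^{\infty}du/(1+u^{\alpha/2})=\pi\lambda_B r_1^2\rho(\theta)$, which is exactly the stated formula once $\theta$ is replaced by $2^t-1$.

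The main obstacle is this third step: one must correctly pin down the integration domain $(r_1,\infty)$ coming from the nearest-BS association, invoke the PGFL with the Rayleigh-fading average $\mathbb{E}_h[e^{-sP_Shv^{-\alpha}}]=1/(1+sP_Sv^{-\alpha})$, and perform the scaling substitution so that the result matches the definition of $\rho(\cdot)$; by contrast the first two steps are routine bookkeeping with the exponential distribution.
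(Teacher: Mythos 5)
Your proposal is correct and follows essentially the same route as the paper's Appendix C: convert the rate threshold to an SINR threshold, integrate out $h_1\sim\exp(1)$ to reduce the problem to the Laplace transform of the interference, apply the PGFL over the interferers outside radius $r_1$, and perform the substitution $u=(v/r_1)^2 S_{\mathrm{th}}^{-2/\alpha}$ to recover $\rho(\cdot)$. No gaps.
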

\begin{proof}
Please refer to Appendix \ref{app.SINR_threshold}.
\end{proof}
Based on Lemma \ref{lm.SINR_threshold}, 
\begin{align}\label{eq.cell_throughput}
\mathbb{E} \left[C|R=r_1 \right] = \int_{0}^{+\infty} \mathbb{P} (C \geq t | R=r_1) dt.  
\end{align}	
Substituting \eqref{eq.Trans_r1} and \eqref{eq.cell_throughput} in \eqref{eq.T_avg_r1}, we can obtain the average per-cell throughput conditioned on $r_1$, from which the network throughput $T_{\mathrm{avg}}$ can be derived by integrating over $r_1$. 




\begin{figure}[t]
\centering
\includegraphics[width=0.44\textwidth]{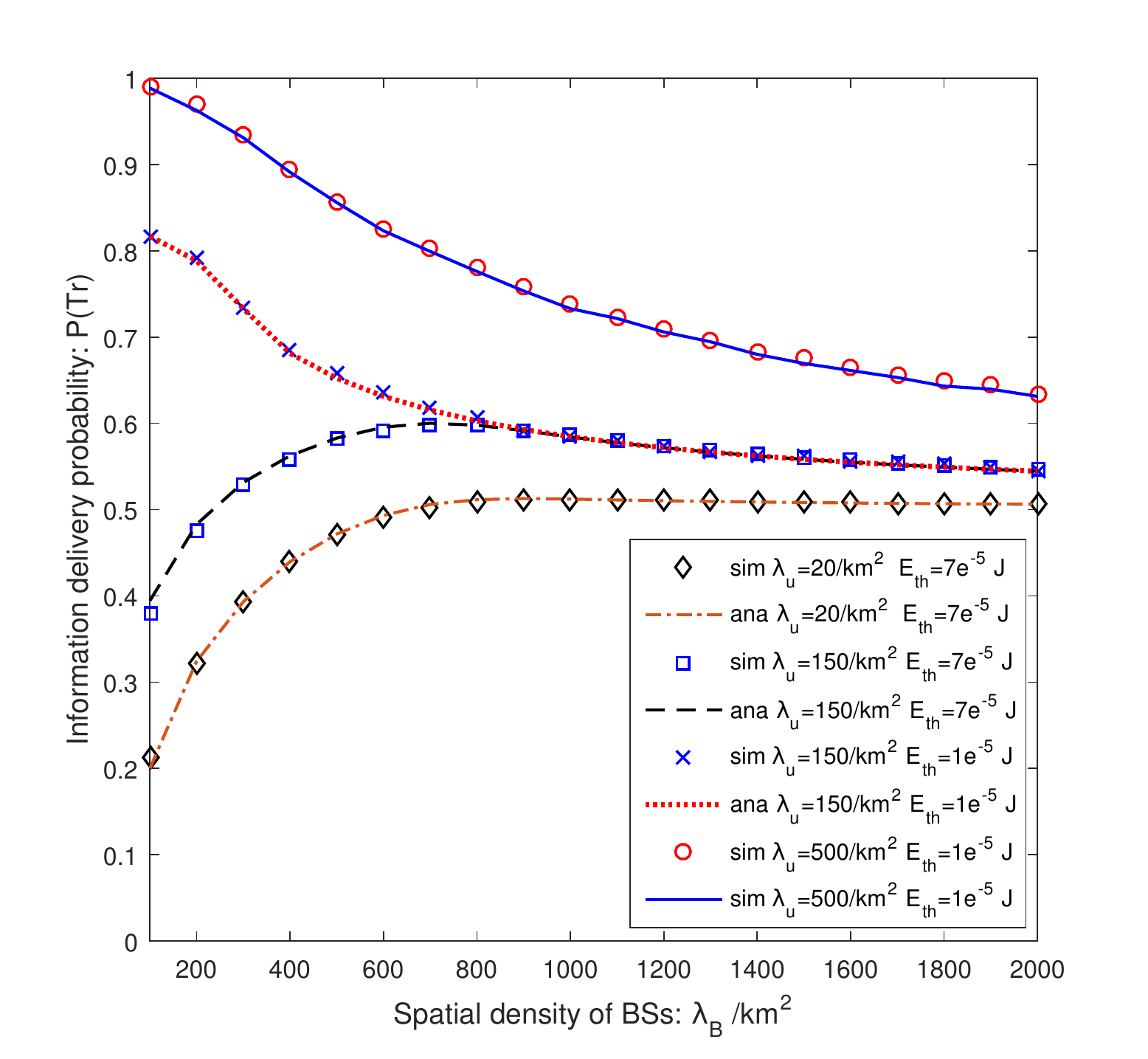}
\caption{Successful information delivery probability under different densities of BSs $\lambda_B$. }
\label{fig.lu_trans}
\end{figure}

\begin{figure}[t]
\centering
\includegraphics[width=0.44\textwidth]{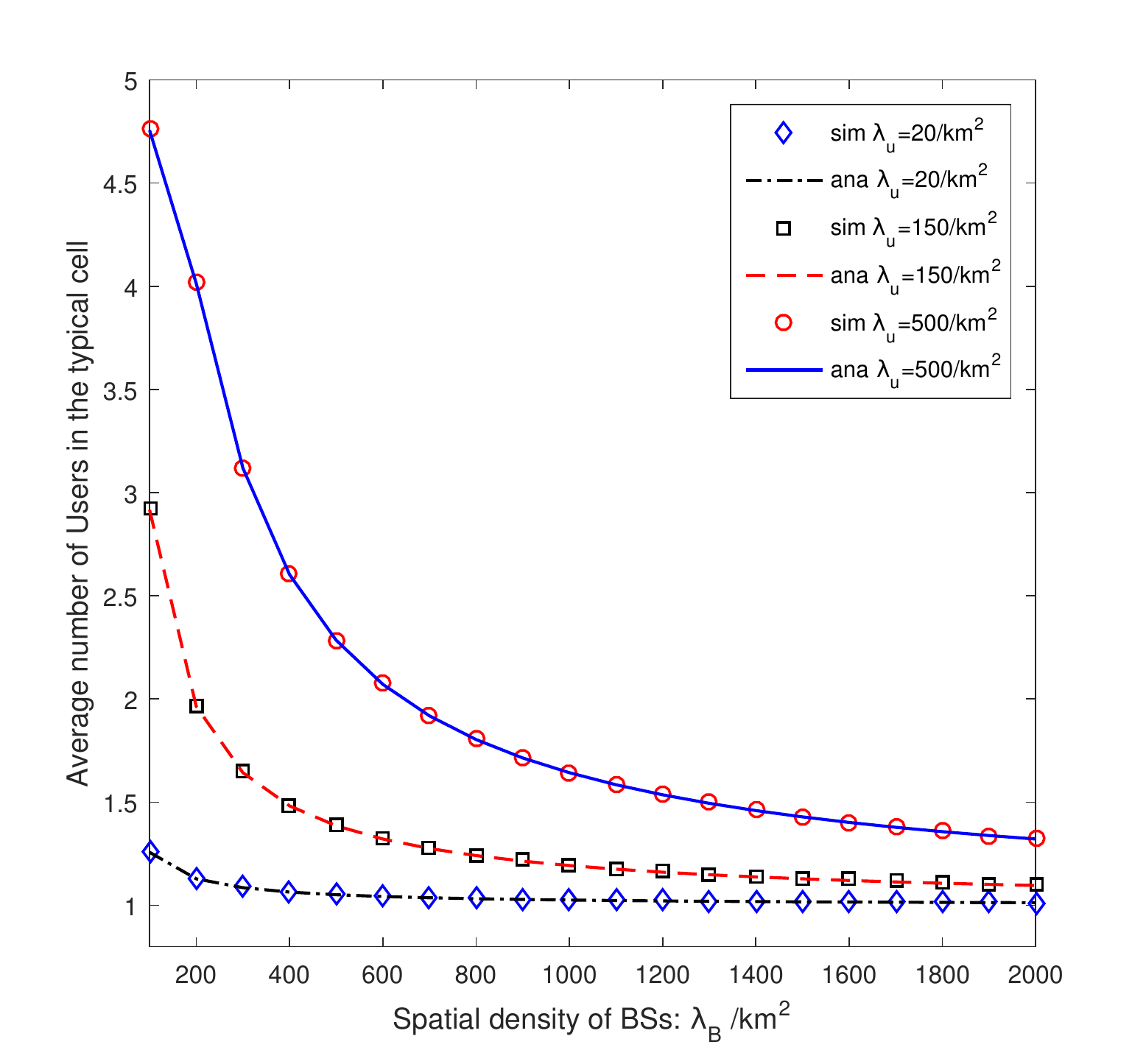}
\caption{The average number of users in one cell under different densities of BSs $\lambda_B$. }
\label{fig.lu_cap_2}
\end{figure}

\begin{figure}[t]
\centering
\includegraphics[width=0.44\textwidth]{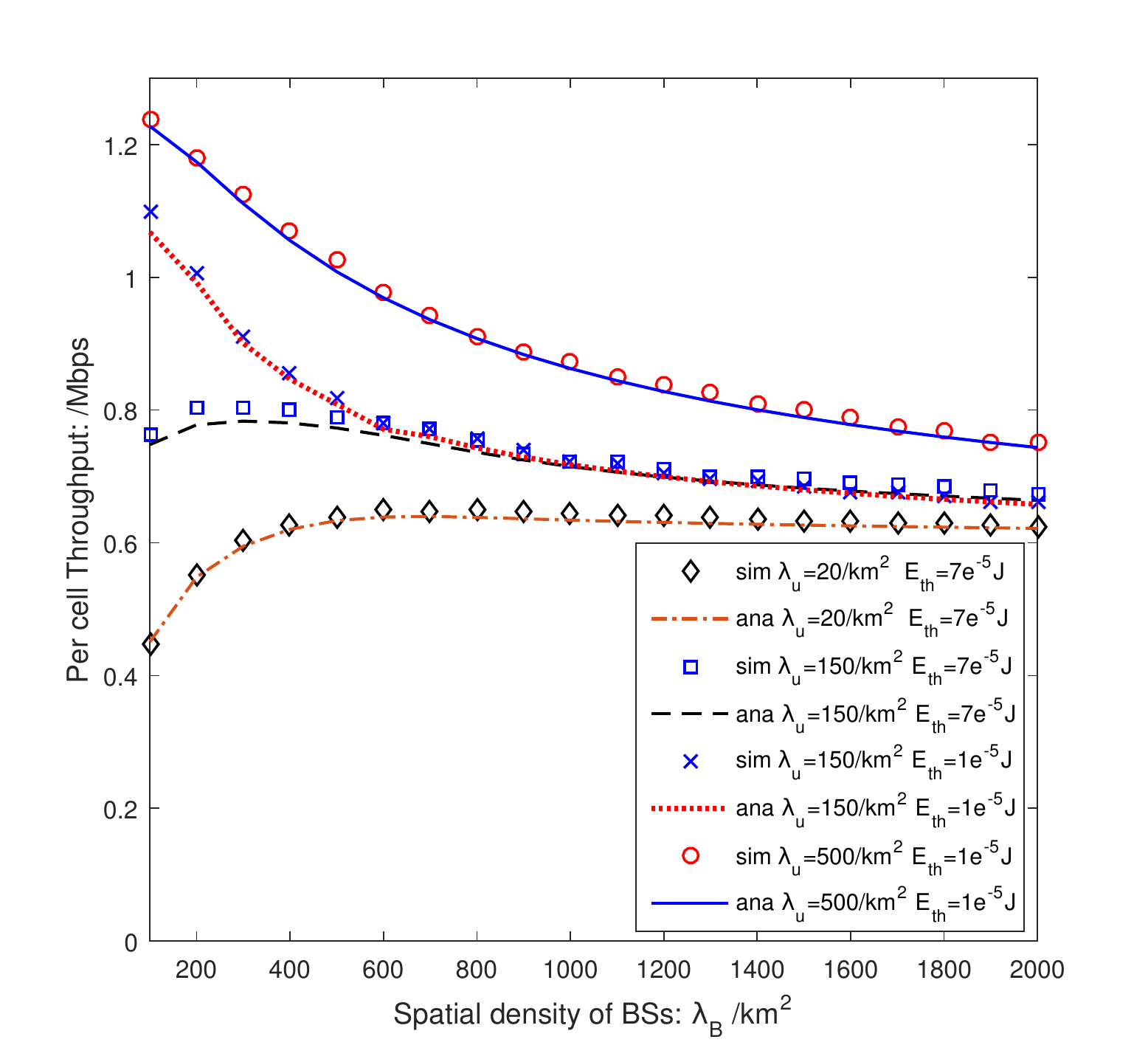}
\caption{The average throughput under different densities of BSs $\lambda_B$.}
\label{fig.lu_cap}
\end{figure}

\section{Numerical Results and Discussions}

In this section, we validate the analysis by simulations with Matlab.
The BSs and users are randomly deployed over a $1000m \times 1000m$ square with different densities. 
The transmission power of BS is $P_S =1W$, the energy threshold is set to $E_{th}=1*10^{-5}$ or $E_{th}=7*10^{-5}$ joules, the path-loss exponent $\alpha = 3$, and the energy conversion efficiency $\eta = 0.5$.
Besides, the thermal noise is considered negligible compared with the interference from other neighboring BSs.

The successful information delivery probability under different density of BSs $\lambda_B$ is shown in Fig.~\ref{fig.lu_trans}. For a smaller $E_{th}=1*10^{-5}$ joule and high density of users, each user may be sufficiently charged in each scheduling round with more users in each cell, and has a successful information delivery probability of 1. The higher the user density, the higher the $P(Tr)$. When the density of users decreases and density of BSs increases, the expected number of users in each cell becomes smaller, as shown in Fig.~\ref{fig.lu_cap_2}. In this case, it may take multiple scheduling rounds for users to harvest energy for one data reception, and thus $P(Tr)$ decreases. For a larger $E_{th}=7*10^{-5}$, $P(Tr)$ increases with $\lambda_B$ as more energy can be harvested from more BSs by a user for data reception, although the number of users in each cell also decreases when $\lambda_B$ increases. For a sufficiently large $\lambda_B$, $P(Tr)$ decreases and eventually converges to $0.5$ when the number of user in each cell approaches 1. In such case, the user takes one slot for charging and one slot for receiving data, and $P(Tr)=0.5$. It is also interesting to see that for the same density of users $\lambda_U=150/km^2,$ as the density of BSs $\lambda_B$ grows, $P(Tr)$ becomes the same after $\lambda_B$ reaches $900/km^2$, which implies that $P(Tr)$ is not dependent on the energy threshold when the density of BSs is large.

The average throughput per cell under different density of BSs $\lambda_B$ is shown in Fig. \ref{fig.lu_cap}. 
BS serves one user at one time in each cell. The achieved per cell throughput is dependent on the achieved link capacity and the successful information delivery probability  $P(Tr)$. 
Similar to that in Fig.~\ref{fig.lu_trans}, the average throughput of each cell is mainly dependent on  $P(Tr)$ in a high denscccccccc ity wireless network as SINR converges with a large $\lambda_B$.  
For a given $\lambda_B$, the average throughput increases as the density of users $\lambda_U$ increases.

The total network throughput is plotted in Fig.~\ref{fig.lu_cap_1}. It can be seen that given the base station density $\lambda_B$, the network throughput increases as the density of users $\lambda_U$ increases, which eventually saturates under a sufficiently large $\lambda_U$. In that case, we define the saturation throughput as the sustainable network capacity. In this case, the network operates sustainably, i.e., in each time slot, the user scheduled to receive downlink data has harvested sufficient energy almost surely to attain the maximum network throughput. For a larger $\lambda_B$, more users are required to ensure sufficient energy harvesting in each round to achieve the sustainable network capacity. 

In Fig. \ref{fig.ratio}, we further study the ratio of $\lambda_U/\lambda_B$ when the sustainable network capacity is achieved. It can be observed that the sustainable ratio decreases as the BS density $\lambda_B$ grows. It reveals that  with more BSs, a fewer number of users in each cell can achieve the sustainable network capacity as more energy harvesting from the BSs. 

\begin{figure}[t]
\centering
\includegraphics[width=0.44\textwidth]{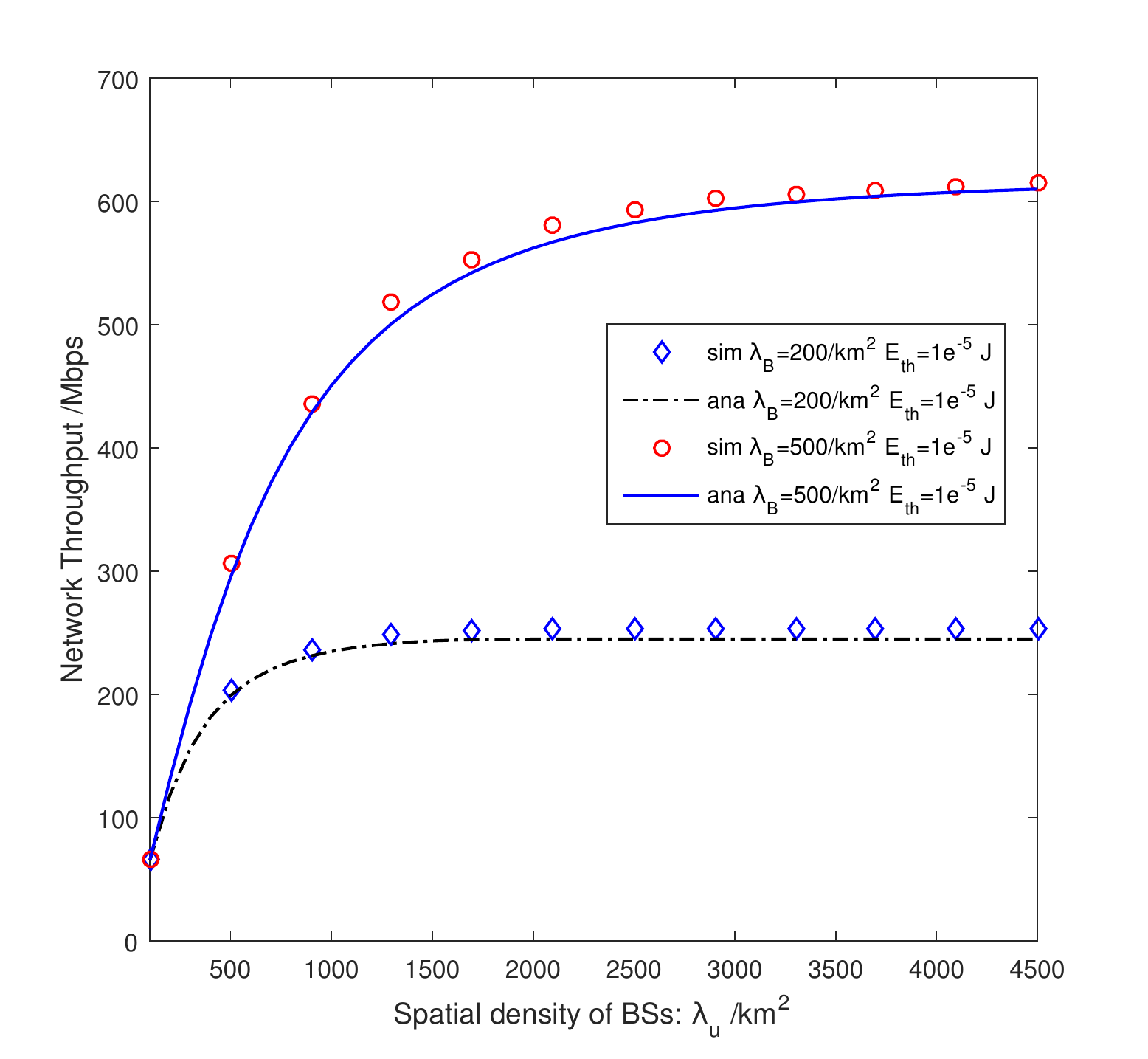}
\caption{Network throughput under different densities of users $\lambda_U$. }
\label{fig.lu_cap_1}
\end{figure}

\begin{figure}[t]
\centering
\includegraphics[width=0.44\textwidth]{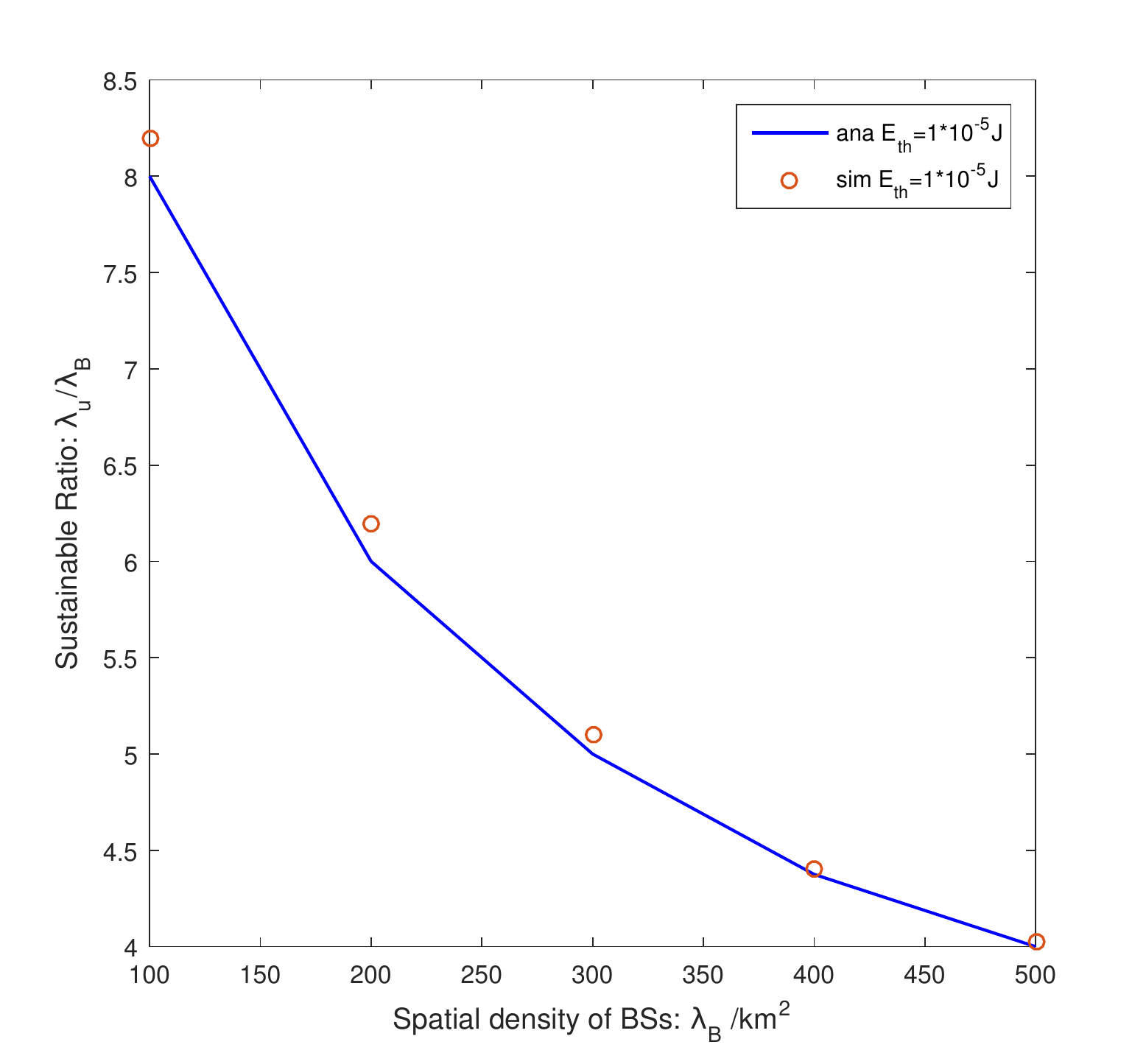}
\caption{Sustainable ratio. }
\label{fig.ratio}
\end{figure}

\section{Conclusion}
In this paper, the network throughput, or the sustainable network capacity of a high density wireless network with RF energy harvesting has been investigated, where low-power wireless devices are powered by the ambient RF energy from concurrent downlink transmissions in different cells. By employing stochastic geometry approach, we have analyzed the successful information delivery probability and the network throughput under different densities of BSs and wireless devices. Our analysis has shown that for a given density of BSs, there exists a minimal density of wireless devices to ensure fully energy sustainable operation of wireless devices; and the maximum sustainable network throughput per cell can be achieved under the optimal density of BSs. Theoretical analysis has been validated by extensive simulations. We will extend the work by considering heterogeneous PPP deployments of BSs and wireless devices in our future work.

\appendix
\subsection{Proof of Lemma \ref{lm.EK}}\label{app.EK}

\begin{figure*}[!t]
\normalsize
\begin{align}
 &\hspace{4pt} \mathbb{P}\left(\sum_{j=1}^{k(n+1)-1} P_{\mathrm{H},j} \geq E_{\mathrm{th}} \bigg | N=n,R=r_1 \right) \nonumber
\hspace{-3pt}\\
=
& \mathbb{P} \left(\sum_{j=1}^{k(n+1)-1} aP_S h_{1,j} r_1^{-\alpha}\!+\hspace{-8pt}\sum_{b_i \in \Phi_{B} /b_{1}} \sum_{j=1}^{k(n+1)-1} aP_S h_{i,j} r_{i}^{-\alpha} \geq E_{\mathrm{th}} \bigg | N=n,R=r_1 \right), \nonumber \\
\hspace{-3pt} \overset{(a)}{=} & \mathbb{P}\left(H_{1,k(n+1)-1} r_1^{-\alpha} + \sum_{b_i \in \Phi_{B} /b_{1}} H_{i,k(n+1)-1} r_{i}^{-\alpha} \geq \frac{E_{\mathrm{th}}}{aP_S} \bigg | N=n,R=r_1 \right), \nonumber \\
\hspace{-3pt} {=} & \mathbb{P}\left(H_{1,k(n+1)-1} r_1^{-\alpha} + \mathbb{E}\left[\sum_{b_i \in \Phi_{B} /b_{1}} H_{i,k(n+1)-1} r_{i}^{-\alpha}\right] \geq \frac{E_{\mathrm{th}}}{aP_S} \bigg | N=n,R=r_1 \right), \nonumber \\
 \hspace{-3pt} \overset{(b)}{\approx}&  \mathbb{P}\left(H_{1,k(n+1)-1}\geq  \frac{E_{\mathrm{th}}r_1^\alpha}{aP_S}-\frac{2\pi [k(n+1)-1]\lambda_Br_1^{2}}{\alpha-2} \bigg| N=n,R=r_1\right), \label{eq.erlang_dist}
\end{align}
\hrulefill
\end{figure*}
According to \eqref{eq.PH}, the probability of energy harvesting ready after $k$ time slots can be written by \eqref{eq.erlang_dist},
where $h_{i,j} \sim \exp(1)$ is the Rayleigh fading from BS $b_i \in \Phi_{B}$ during slot $j$. In step $(a)$, we notice that the summation of independent exponential random variables follows Erlang distribution~\cite{adan2002queueing}, from which we denote $H_{i,k(n+1)-1} = \sum_{j=1}^{k(n+1)-1} h_{i,j} \sim \mathrm{Erlang}(k(n+1)-1,1)$.
In step $(b)$, the total RF energy harvested from beyond the nearest BS $b_1$ can be approximated by its mean with given $r_1$ \cite{kishk2016downlink}, i.e.
\begin{align}
&\mathbb{E}_{\Phi_B}\left[\sum_{b_i \in \Phi_{B}/b_{1}}  H_{i,k(n+1)-1} r_{i}^{-\alpha}  \bigg |R=r_{1}\right] \nonumber\\
 \overset{(c)}{=} &[k(n+1)-1]\mathbb{E}_{\Phi_B}\left[\sum_{b_i \in \Phi_{B}/b_{1}}  r_{i}^{-\alpha} \bigg| R=r_{1}\right], \nonumber \\
 \overset{(d)}{=} &2\pi [k(n+1)-1]\lambda_B\int_{r_1}^{\infty}\frac{1}{r^\alpha}rdr, \nonumber \\
 = &\frac{2\pi [k(n+1)-1]\lambda_B r_1^{2-\alpha}}{\alpha-2}, \nonumber
\end{align}
where $(c)$ holds because all $H_{i,k(n+1)-1}$ are independent Erlang distributed random variable with condition $[k(n+1)-1]$, $(d)$ follows the Campbell's theorem for summation over PPP \cite{haenggi2012stochastic}.

\begin{figure}[t]
\centering
\includegraphics[width=0.44\textwidth]{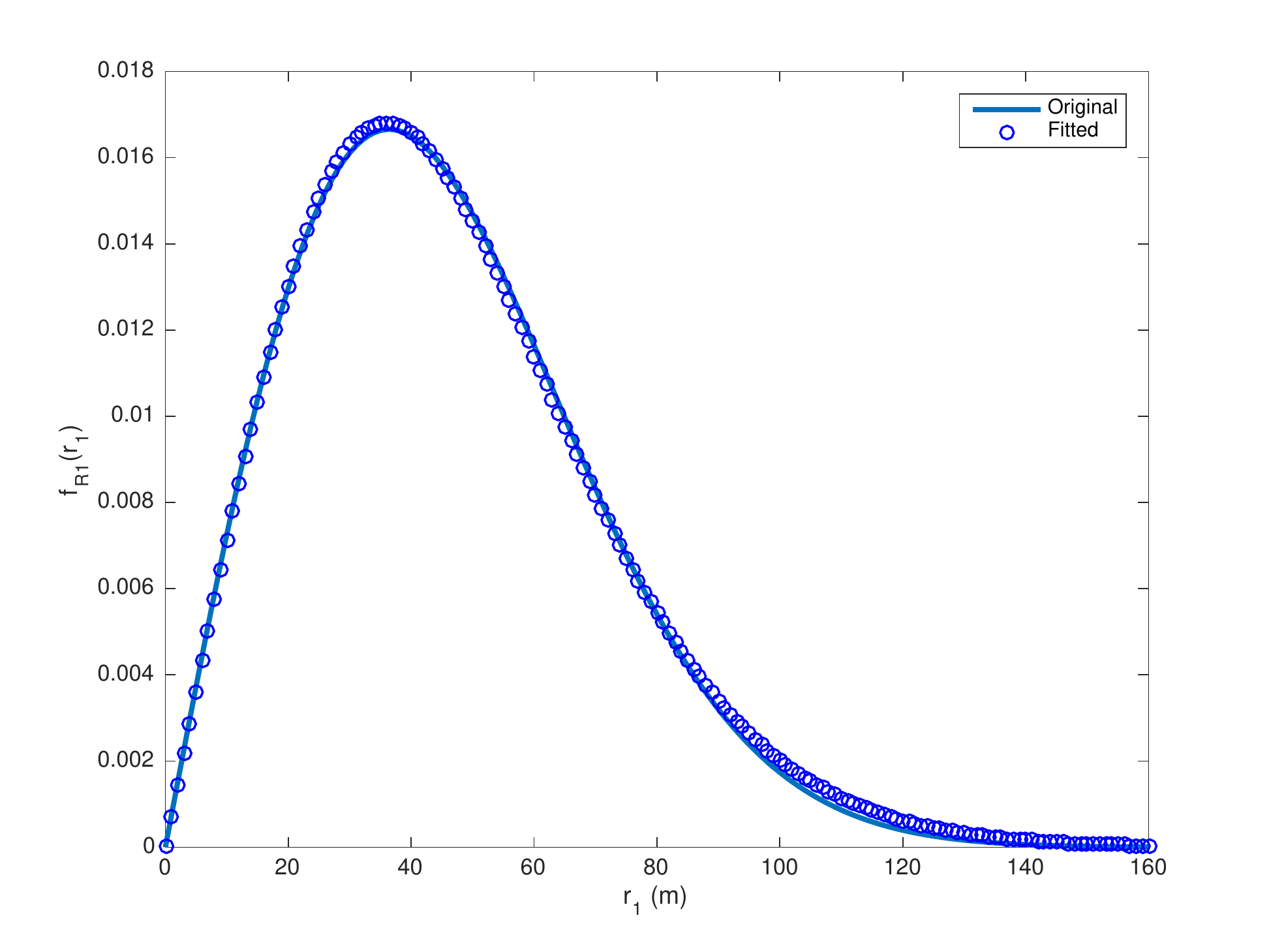}
\caption{Comparison between the original curve and the results after fitting.}
\label{fig.fitting}
\end{figure}

\subsection{Proof of Lemma \ref{lm.X_r1}}\label{app.X_r1}

For the conditional PDF $f_{R_1|X = x}(r_1)$, we know that it will satisfy the following equality
\begin{align} \label{eq.condition}
f_{R_1}(r_1) = \int_{0}^{+\infty} f_{R_1|X=x}(r_1)f_{X}(x)dx,
\end{align}
where $f_X(x) = \frac{3.5^{4.5}}{\Gamma(4.5)}x^{3.5}e^{-3.5x}$ is the PDF of the cell's area where the typical user is located \cite{yu2013downlink}.
Since the PDF $f_X(x)$ comes from the PDF of a Voronoi cell's area that was derived by the Monte Carlo method \cite{ferenc2007size}, it will be very challenging to derive the close-form result for $f_{R_1|X = x}(r_1)$. Thus, noticing that the typical user is uniformly distributed in a cell with area $X=x$, it can be mapped into a normalized cell with unit area, i.e. $X=1$, where the pdf for the distance $r_1$ will be proportional to its area $X=x$. That is to say, we have the following property,
\begin{align} \label{eq.mapping}
f_{R_1|X=x}(r_1) = \frac{f_{R_1|X=1}\left(\frac{r_1}{\sqrt{x}}\right)}{\sqrt{x}}.
\end{align}
Thus, from \eqref{eq.condition} and \eqref{eq.mapping} we can obtain
\begin{align}
f_{R_1}(r_1) = \int_{0}^{+\infty} \frac{f_{R_1|X=1}\left(\frac{r_1}{\sqrt{x}}\right)}{\sqrt{x}}f_{X}(x)dx,
\end{align}
where $f_{R_1}(r_1)$ and $f_{X}(x)$ are both known results and we now can derive the PDF of the distance $r_1$ given the normalized cell area $X=1$.

Since $f_{R_1}(r_1)$ and $f_X(x)$ have the similar form, we can fit the conditional PDF distribution of $r_1$ as follows
\begin{align}
f_{R_1|X=1}(r_1) = c_1 \cdot r_1^{c_2}e^{-c_3 \cdot r_1^{c_4}},
\end{align}
where $c_1 \sim c_4$ are coefficients that we need to calculate by minimizing the fitting error. Considering the least square error fitting, we obtained the fitted curve as shown in Fig. \ref{fig.fitting} and the coefficient values  as $c_1 = 6.029$, $c_2 = 1$, $c_3 = 3.891$ and $c_4 = 2.7$.

\subsection{Proof of Lemma \ref{lm.SINR_threshold}}\label{app.SINR_threshold}

For the typical user, the conditional probability of the link capacity $C$ can be rewritten by
\begin{align}
\mathbb{P} (C \geq t | R=r_1) & = \mathbb{P} (\mathrm{SINR} \geq 2^t - 1 | R=r_1). \label{eq.C_r1}
\end{align}
which is the conditional probability of SINR larger than the threshold $2^t-1$ denoted as $S_{\mathrm{th}}$.
Thus, we can first derive a more general conditional probability for SINR as follows,
\begin{align}
& \hspace{0.5em}\mathbb{P} (\mathrm{SINR} \geq S_{\mathrm{th}} | R=r_1) \nonumber \\
& \hspace{-0.3em} = \mathbb{E}_{h,\Phi_B} \left[\frac{P_S h_1 r_1^{-\alpha}}{\sigma^2+I_{1}} \geq S_{\mathrm{th}} \bigg | R=r_1 \right], \nonumber \\
& \hspace{-0.3em} \overset{(a)}{=} \mathbb{E}_{h,\Phi_B} \left[ e^{-\frac{S_{\mathrm{th}} r_1^{\alpha}(\sigma^2 + I_1)}{P_S} } \bigg | R=r_1 \right],\nonumber \\
& \hspace{-0.3em} \overset{(b)}{=} e^{-\frac{S_{\mathrm{th}}r_1^{\alpha}\sigma^2}{P_S}} \mathbb{E}_{\Phi_B} \left[ \prod_{b_i \in \Phi_B / b_1} \mathbb{E}_{h}\left[  e^{-S_{\mathrm{th}} r_1^{\alpha} h_i r_i^{-\alpha} } \right] \bigg | R=r_1  \right], \nonumber \\
& \hspace{-0.3em} \overset{(c)}{=} e^{-\frac{S_{\mathrm{th}}r_1^{\alpha}\sigma^2}{P_S}}  \exp\left( -2\pi\lambda_{B}\!\int_{r_1}^{+\infty}\hspace{-4pt}\left( 1\!-\!\mathbb{E}_{h}\left[  e^{- \frac{S_{\mathrm{th}} r_1^{\alpha} h_i}{r^{\alpha} }  } \right]  \right)rdr \right), \nonumber \\
& \hspace{-0.3em} \overset{(d)}{=} e^{-\frac{S_{\mathrm{th}}  r_1^{\alpha}\sigma^2}{P_S} - \pi\lambda_{B}r_1^2 \rho(S_{\mathrm{th}}) },
\end{align}
where $(a)$ holds because $h_1 \sim \exp(1)$, $(b)$ is due the fact that all the fading gains from different BSs are independent and $(c)$ comes from the probability generating functional (PGFL) of the PPP \cite{haenggi2012stochastic}. For step $(d)$, we have
\begin{align}
\int_{r_1}^{+\infty}\left(\!1\!-\!\mathbb{E}_{h}\left[e^{- \frac{S_{\mathrm{th}} r_1^{\alpha} h}{r^{\alpha}} }  \right]\!\right)r dr &\!=\!\int_{r_1}^{+\infty} \left(\!1\!-\!\frac{1}{1+ S_{\mathrm{th}} \frac{r_1^{\alpha}}{r^{\alpha}}}\!\right) rdr,  \nonumber \\
& \!\overset{(e)}{=}\!\frac{1}{2}r_1^2 \rho(S_{\mathrm{th}}), \nonumber
\end{align}
where $(e)$ holds because we change the variables to be $u = (r/r_1)^2S_{\mathrm{th}}^{-\frac{2}{\alpha}}$ and define $\rho(x) = x^{\frac{2}{\alpha}} \int^{+\infty}_{x^{-\frac{2}{\alpha}}} \frac{du}{1+u^{\frac{\alpha}{2}}}$.
Therefore, letting $S_{\mathrm{th}} = 2^t-1$, we can obtain the result in Lemma \ref{lm.SINR_threshold}.

\section*{Acknowledgement}
This work is supported in part by NSF grants CNS-1320736, ECCS-1610874, NSF Career award ECCS1554576, and National Natural Science Foundation of China (NSFC) under grant 61628107.

\vspace{16pt}
\bibliographystyle{IEEEtran}
\bibliography{IEEEfull,geo(1)}

\end{document}